\documentclass[12pt]{article}

\usepackage{amsmath,amsthm,amssymb}
\usepackage{setspace}
\usepackage{mathptmx}
\usepackage{subcaption}
\usepackage{bm}
\usepackage{tikz}
\usepackage{algorithm2e}
\pgfdeclarelayer{background}
\pgfdeclarelayer{foreground}
\pgfsetlayers{background,main,foreground}
\usepackage{url}

\theoremstyle{plain}
\newtheorem{theorem}{Theorem}

\newcommand{\lmaj}{\ensuremath{\succcurlyeq}}
\newcommand{\rmaj}{\ensuremath{\preccurlyeq}}
\newcommand{\lsmaj}{\ensuremath{\succ}}

\newcommand{\seq}[1]{\ensuremath{\left( #1 \right)}}

\title{
One-Pass Graphic Approximation of Integer Sequences 
\footnote{Official contribution of the National Institute of Standards and Technology; not subject to copyright in the United States.}
}

\author{Brian~Cloteaux \\
\small National Institute of Standards and Technology,\\
\small Applied and Computational Mathematics Division,\\
\small Gaithersburg, MD \\
\texttt{brian.cloteaux@nist.gov}}

\date {}


\begin{document}

\maketitle

\begin{abstract}
A variety of network modeling problems begin by generating a degree
sequence drawn from a given probability distribution. 
If the randomly generated sequence is not graphic, we give a new
approach for generating a graphic approximation of the sequence.  This
approximation scheme is fast, requiring only one pass through the
sequence, and produces small probability distribution distances for
large sequences.
\end{abstract}

\section{Introduction}
Model creation of real-world complex networks, such as social and
biological interactions, remains an important area of research.
The modeling and simulation of these systems often
requires developing realistic graph models for the underlying networks
embedded in these systems.
A common first step in the creation of these graph models is to
construct a graph with a given degree sequence. If we can construct a
graph from an integer sequence, we say that the sequence is graphic. 

One problem from the random modeling of graphs is selecting a
graphic degree sequence from some given probability distribution.
More specifically, how do we deal with a randomly drawn integer sequence that
is not graphic?
There have been two approaches to this problem. The first is to simply discard
the sequence and repeatedly select a new sequence until a graphic
sequence is found.  This is the approach used
by the NetworkX~\cite{Hagberg:2008} graph library.  A disadvantage to this
approach is that for some probability distributions, the chance of
selecting a graphic sequence is small.  Thus the NetworkX library
places an upper limit on the number of attempts to generate a graphic
sequence before abandoning the task.  This is  potentially a 
computationally expensive task for very large sequences selected from a
sequence with a small probability of being graphic.

In response to this difficulty, a second approach,
suggested by Mihail and Vishoi~\cite{Mihail:2002},
is to find the closest graphic sequence to the original non-graphic
degree sequence using a given distance measure.  Using a distance measure
that they called the discrepancy, Mihail and Vishoi
provided a polynomial-time algorithm for finding these graphic approximations by
reducing this problem to maximum cardinality matching;
unfortunately, their algorithm is not practical for many cases.
They posed the question whether a practical algorithm for
determining nearest graphical sequence under the discrepancy measure could be
found.  This question was answered by Hell and
Kirkpatrick~\cite{Hell:2009} who
introduced two algorithms for computing a minimal discrepancy graphic
sequence in lower polynomial time.

When Mihail and Vishoi considered this problem,
they posed a couple of natural questions.  The first is "What is a natural
notion of distance between two degree sequences?", and the second is "Are
there efficient algorithms to approximate a non-graphic sequence with a
graphic sequence whose distance is small?"  We examine both of these
questions and make a new contribution to the problem by introducing a
new approach that requires only one pass through the sequence in order
to give a graphic approximation. Further, we suggest that 
that minimizing distances between probability distributions,
specifically the total
variation distance, is a more appropriate approach to this problem.
Finally, we show that under the total variation distance,
the quality of the result from our
approximation algorithm for many sequences improves as the size of the
sequences increases.

\section{Basic  Results}
Before introducing the algorithm, we need some preliminary
definitions and results.
A {\it degree sequence} $\alpha = \seq{\alpha_1, \alpha_2, ..., \alpha_{n}} $
is a set of non-negative integers such that $\alpha_1 \geq \alpha_2 \geq
... \geq \alpha_n \geq 0$.  As a notational convenience, if a value $a$
is represented $r$ times in a sequence we represent that subsequence as
$a^r$, i.e.,
\begin{equation}
\seq{\underbrace{a,...,a}_{r}} = \seq{a^r}.
\end{equation}
If there exists a graph $G$ whose node degrees match the sequence $\alpha$, then
$\alpha$ is said to be {\it graphic}; else, the sequence is called {\it
non-graphic}.

If a sequence $\alpha$ has an even sum $s$ where $s \leq
|\alpha|(|\alpha| -1)$, then we say that the sequence is {\it potentially
graphic}.  It is straightforward to see that any sequence that violates
this condition is not graphic. Conversely, if this condition is met, then
there exists some sequence having length $|\alpha|$ and sum $s$ that is
graphic (this follows from Chen~\cite{Chen:1988}, Lemma 1).
The minimal requirement for graphic approximation of a  
non-graphic sequence is that original sequence must be potentially graphic.

A tool we use to work with degree sequences is the
partial ordering called {\it majorization}.
A degree sequence $\alpha$ majorizes (or dominates) the
integer sequence $\beta$,
denoted by $\alpha \succcurlyeq \beta$, if for all $k$ from $1$ to $n$
\begin{equation} \label{eqn:major1}
\sum_{i=1}^{k} \alpha_i \geq \sum_{i=1}^{k} \beta_i,
\end{equation}
and if the sums of the two sequences are equal. If $\alpha \lmaj \beta$
and there exists an index $k$ where
$\sum_{i=1}^{k} \alpha_i > \sum_{i=1}^{k} \beta_i$, then we say that
$\alpha$  {\it strictly} majorizes $\beta$, denoted as $\alpha \lsmaj
\beta$.

Over the set of integer partitions $\mathcal{L}_p$
for some positive integer $p$,
majorization forms the lattice $(\mathcal{L}_p, \lmaj)$
\cite{Brylawski:1973} where the meet operator is defined as
follows: 
\begin{equation}
\begin{split}
\label{eqn:meet-operation}
( \alpha \wedge \beta )_k & = \min\lbrace \sum_{i=1}^{k} \alpha_i,
\sum_{i=1}^{k} \beta_i \rbrace - \sum_{i=1}^{k-1} (\alpha \wedge
\beta)_i \\
        & =
\min\lbrace \sum_{i=1}^{k} \alpha_i, \sum_{i=1}^{k} \beta_i \rbrace
- \min\lbrace \sum_{i=1}^{k-1} \alpha_i, \sum_{i=1}^{k-1} \beta_i
\rbrace, 
\end{split}
\end{equation}

For our purposes, the usefulness of comparing degree sequences using
majorization stems from the following result which shows that
every sequence that is majorized by a
graphical sequence must also be graphical.

\begin{theorem}[Ruch and Gutman~\cite{Ruch:1979}, Theorem 1]
\label{thm:graphicality}
If the degree sequence $\alpha$ is graphic and $\alpha \lmaj \beta$,
then $\beta$ is graphic.
\end{theorem}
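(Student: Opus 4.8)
The plan is to reduce the claim to a single, local modification of a degree sequence and then to show that graphicality is preserved under that modification. The structural fact I would invoke is the classical lemma of Hardy, Littlewood, and P\'olya: whenever $\alpha \lmaj \beta$ and both are partitions of the same integer, $\beta$ can be obtained from $\alpha$ by a finite sequence of \emph{elementary transfers}, each of which removes a single unit from one part and adds it to a strictly smaller part. Since majorization is transitive and every intermediate sequence is again a partition of the common sum $s$, it suffices to prove the one-step claim: if $\gamma$ is graphic and $\gamma'$ is obtained from $\gamma$ by one elementary transfer, then $\gamma'$ is graphic.

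For the one-step claim I would argue directly on a realizing graph. Let $G$ realize $\gamma$, and suppose the transfer removes a unit from the part at position $u$ and adds it to the part at position $v$, so that $\deg_G(u) > \deg_G(v)$. The goal is to transform $G$ into a graph realizing $\gamma'$, that is, to lower $\deg(u)$ by one and raise $\deg(v)$ by one while leaving every other degree fixed. The natural move is an edge swap: locate a vertex $w \ne u,v$ that is adjacent to $u$ but not to $v$, delete the edge $uw$, and insert the edge $vw$. This decreases $\deg(u)$ by one, increases $\deg(v)$ by one, preserves $\deg(w)$, and produces a simple graph realizing $\gamma'$.

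The crux is guaranteeing that such a vertex $w$ exists, and here I would use a counting argument. Writing $A = N(u) \setminus \{v\}$ and $B = N(v) \setminus \{u\}$, both sets are contained in $V \setminus \{u,v\}$, and $|A| - |B| = \deg_G(u) - \deg_G(v) > 0$ whether or not $uv$ is itself an edge. Hence $A \setminus B$ is nonempty, and any $w \in A \setminus B$ is adjacent to $u$, distinct from $u$ and $v$, and not adjacent to $v$ — exactly the vertex the swap requires. I expect this existence step to be the main obstacle, since it is the point at which the strict degree inequality supplied by the transfer is actually consumed; by comparison, the reduction via Hardy--Littlewood--P\'olya and the bookkeeping that each transfer leaves a valid nonnegative partition are routine.

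Finally, I would note an alternative route that avoids graphs entirely by appealing to the Erd\H{o}s--Gallai characterization of graphic sequences. Because $\alpha$ and $\beta$ share the common even sum $s$, it would remain only to deduce each Erd\H{o}s--Gallai inequality for $\beta$ from the corresponding inequality for $\alpha$ together with the head-sum bounds $\sum_{i=1}^{k} \beta_i \le \sum_{i=1}^{k} \alpha_i$. The difficulty in that approach is controlling the truncated tail terms $\sum_{i>k} \min(\cdot,k)$ under majorization, which is precisely why I would favor the transfer-and-swap argument above.
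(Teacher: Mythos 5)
The paper does not prove this theorem at all: it is quoted verbatim from Ruch and Gutman as a known result, so there is no in-paper argument to compare yours against. Judged on its own, your proof is correct and complete. The reduction via the Hardy--Littlewood--P\'olya transfer lemma is legitimate for integer sequences of equal sum (it is, in essence, the covering relation of the dominance lattice of Brylawski that the paper itself cites), each intermediate sequence remains a non-negative integer partition of $s$, and induction on the number of transfers reduces everything to the one-step claim. Your edge-swap argument for that step is sound, and in particular the counting that produces the vertex $w$ is handled correctly: with $A = N(u)\setminus\{v\}$ and $B = N(v)\setminus\{u\}$, the identity $|A|-|B| = \deg_G(u)-\deg_G(v)$ holds in both the case $uv \in E$ and the case $uv \notin E$, so $A \setminus B \neq \emptyset$, and any $w$ in it admits the swap (delete $uw$, insert $vw$) without creating loops or multi-edges. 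One cosmetic remark: after a transfer the sequence may need re-sorting to be non-increasing, but since graphicality is a property of the multiset of degrees this costs nothing. Your closing observation is also apt --- the alternative route through Erd\H{o}s--Gallai does work but requires a careful argument that the right-hand side $k(k-1) + \sum_{i>k}\min(\beta_i,k)$ does not decrease too much under majorization, which is fiddlier than the transfer-and-swap argument you chose.
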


This result  establishes the location of the graphic sequences in the
lattice $(\mathcal{L}_p,\lmaj)$.
If the integer $p$ is even, then there will be a small number of elements
at the bottom of the partition lattice $\mathcal{L}_p$ that are graphic,
while the remaining majority of the partitions will be non-graphic
\cite{Pittel:1999}.  It also shows that the graphic sequences form a
semi-lattice in $\mathcal{L}_p$.

A degree sequence which has precisely one labeled realization is
called a {\it threshold sequence} and the resulting
realization is called a {\it threshold graph} \cite{Mahadev:1995}.
The following theorem shows that the threshold sequences sit at the top of the
semi-lattice of graphic sequences.

\begin{theorem}[\cite{Mahadev:1995}, Theorem 3.2.2] \label{thm:top_graphic}
A graphic degree sequence $\alpha$ is threshold if and only if there does
not exist a graphic sequence $\beta$ such that $\beta \lsmaj \alpha$.
\end{theorem}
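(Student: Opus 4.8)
```latex
The plan is to prove both directions of the biconditional by working
with the structure of the majorization lattice and the defining
property of threshold sequences. First I would establish the forward
direction (threshold implies maximality): suppose $\alpha$ is a
threshold sequence but, for contradiction, that some graphic
$\beta \lsmaj \alpha$ exists. The key idea is to exploit the fact that
$\alpha$ has a unique labeled realization. Since $\beta$ strictly
majorizes $\alpha$, the two sequences differ, and I would use the
standard characterization of strict majorization by a sequence of
elementary transfers (Muirhead-style unit transfers, or the
$\ut{}{}$ operations the paper's notation seems to anticipate): one can
pass from $\beta$ down to $\alpha$ by repeatedly moving a single unit of
degree from a higher-indexed node to a lower-indexed one. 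I would argue
that the existence of such a strict transfer yields a second,
distinct realization of $\alpha$, contradicting uniqueness. Concretely,
a threshold graph is characterized by having no alternating
$2$-switch available (no pair of independent edges that can be swapped),
and a strictly majorizing graphic neighbor provides exactly the degree
freedom needed to construct such a switch.

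For the converse (maximality implies threshold), I would prove the
contrapositive: if $\alpha$ is graphic but \emph{not} threshold, then
$\alpha$ has at least two distinct labeled realizations, and I would
show this forces the existence of a graphic $\beta \lsmaj \alpha$. Two
distinct realizations of the same degree sequence are connected by a
$2$-switch (this is the classical Petersen/Havel--Hakimi switching
result). A $2$-switch replaces edges $\{x,y\},\{z,w\}$ by
$\{x,z\},\{y,w\}$; while this preserves the degree sequence itself, I
would instead use the presence of the required alternating configuration
to perform a unit transfer that raises a partial sum, producing a
graphic sequence $\beta$ with $\beta \lsmaj \alpha$. By
Theorem~\ref{thm:graphicality}, any sequence majorized by a graphic
sequence is graphic, so once I exhibit a graphic $\beta$ with
$\beta \lsmaj \alpha$ the argument closes; the real content is showing
that non-uniqueness of the realization guarantees such a strictly
dominating graphic sequence exists in the first place.

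I expect the main obstacle to be the careful bookkeeping in translating
between the combinatorial operation on graphs (the $2$-switch or edge
rearrangement) and the order-theoretic statement about partial sums in
the lattice $(\mathcal{L}_p, \lmaj)$. Specifically, it must be verified
that the graph-level modification changes the sorted degree sequence in
a way that strictly increases some prefix sum $\sum_{i=1}^{k}\alpha_i$
while preserving the total sum, which is precisely the condition
$\beta \lsmaj \alpha$ from~\eqref{eqn:major1}. Keeping the sequences
sorted in nonincreasing order throughout these transfers, and ensuring
that a unit transfer between two specific nodes indeed corresponds to a
covering relation (or at least a strict majorization) in the lattice,
is the delicate step. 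Given that the cited source (\cite{Mahadev:1995},
Theorem~3.2.2) is a standard reference, I would ultimately lean on the
known equivalence between threshold graphs, the absence of induced
$P_4$, $C_4$, and $2K_2$, and the iterative isolated/dominating vertex
construction, using whichever characterization makes the
switch-to-transfer correspondence cleanest.
```
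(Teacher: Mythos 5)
First, a structural point: the paper does not prove this theorem at all --- it is imported verbatim from Mahadev and Peled (\cite{Mahadev:1995}, Theorem 3.2.2) as a known result, so there is no in-paper argument to compare you against. Judged on its own merits, your route through the unique-labeled-realization definition and $2$-switches is viable, but the pivotal step of your forward direction is asserted rather than proved: you say a graphic $\beta \lsmaj \alpha$ ``yields a second, distinct realization of $\alpha$,'' and that is precisely the hard content. The gap is fillable, and here is the construction you would need. By the Muirhead/Hardy--Littlewood--P\'olya decomposition, $\alpha$ is reachable from $\beta$ by a chain of unit transfers, and every intermediate sequence is majorized by $\beta$, hence graphic by Theorem~\ref{thm:graphicality}; so you may assume $\alpha$ arises from a graphic $\beta$ by a \emph{single} transfer, say $\alpha_i = \beta_i - 1$, $\alpha_j = \beta_j + 1$ with $i<j$, and note $\beta_i \geq \beta_j + 2$ because $\alpha$ is sorted. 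Take a realization $H$ of $\beta$ with vertices $u,w$ of degrees $\beta_i, \beta_j$. Since $d_H(u) \geq d_H(w)+2$, the vertex $u$ has a neighbor $x \neq w$ with $x \notin N_H(w)$; rotating the edge $ux$ to $wx$ produces a realization $G$ of $\alpha$. In $G$ we have $d_G(u) \geq d_G(w)$ while $x$ is a neighbor of $w$ but not of $u$; a counting argument on $N_G(u)$ versus $N_G(w)$ then produces a neighbor $y$ of $u$ with $y \notin N_G(w) \cup \lbrace w \rbrace$, so the edges $uy, wx$ and non-edges $ux, wy$ form an alternating $4$-cycle. The corresponding $2$-switch gives a second labeled realization of $\alpha$, contradicting thresholdness. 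Without some version of this three-step argument (reduce to one transfer, rotate, exhibit the switch), your forward direction does not close.

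Your converse is sketched more soundly: two distinct labeled realizations force an applicable $2$-switch by the classical switching-connectivity theorem, and the edge rotation you gesture at does produce a graphic sequence strictly majorizing $\alpha$. The one detail you must add is the choice of rotation direction: with edges $\lbrace x,y \rbrace, \lbrace z,w \rbrace$ and non-edges $\lbrace x,z \rbrace, \lbrace y,w \rbrace$, rotate $\lbrace z,w \rbrace$ onto $\lbrace z,x \rbrace$ when $d(x) \geq d(w)$, and rotate $\lbrace x,y \rbrace$ onto $\lbrace y,w \rbrace$ otherwise, so that the vertex gaining a unit already has degree at least that of the vertex losing one; only then does some prefix sum strictly increase, which is the condition $\beta \lsmaj \alpha$ from~\eqref{eqn:major1}. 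Two smaller issues: you state the transfers backwards (descending the majorization order moves units from low-indexed, large entries to high-indexed, small ones, not the reverse), and your fallback of ``leaning on'' the cited Theorem 3.2.2 would be circular, since that is the statement being proved --- leaning on Theorem~\ref{thm:dom_decomp} instead is legitimate, but your $2$-switch route, once completed as above, does not need it.
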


This next theorem provides a decomposition result that we can use to recognize
threshold graphs (and, by extension, threshold sequences).

\begin{theorem}[\cite{Mahadev:1995}, Theorem 1.2.4] \label{thm:dom_decomp}
A graph $G$ is threshold if and only if it can be constructed from a
one-vertex by repeatedly adding either an isolated or dominating vertex.
\end{theorem}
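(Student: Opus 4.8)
The plan is to prove both directions by induction on $n = |V(G)|$, with the base case of a single vertex being trivially threshold. The forward implication (a constructible graph is threshold) is the routine direction, while the reverse implication (a threshold graph is constructible) carries the real content and reduces to a single structural lemma.

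For the direction where $G$ is built from a one-vertex graph by repeatedly adding isolated or dominating vertices, I would show that each such step preserves the property of having a unique labeled realization. Suppose $G$ arises from a threshold graph $G'$ by adding a dominating vertex $v$, and let $\alpha$ be the degree sequence of $G$. Any realization $H$ of $\alpha$ must contain a vertex $w$ of degree $n-1$, which is necessarily adjacent to every other vertex; deleting $w$ lowers each remaining degree by exactly one, producing a realization of the degree sequence of $G'$. By the inductive hypothesis this realization is forced to equal $G'$, so reattaching the dominating vertex recovers $H = G$. The isolated-vertex case is identical, deleting a degree-$0$ vertex in place of the degree-$(n-1)$ one.

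For the converse, I would peel $G$ one vertex at a time. The key claim is that \emph{every threshold graph on at least one vertex contains a vertex that is either isolated or dominating}. Granting this, let $v$ be such a vertex. I would first check that $G - v$ is again threshold: any realization of the degree sequence of $G - v$ extends, by re-adding $v$ as an isolated (respectively dominating) vertex, to a realization of the degree sequence of $G$, so uniqueness for $G$ transfers to $G - v$. Induction then yields a valid construction of $G - v$, and appending $v$ as its isolated or dominating vertex completes the construction of $G$.

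The main obstacle is the key claim, which I would establish through the \emph{nested-neighborhood} property: in a threshold graph, any two vertices $u, w$ satisfy $N(u) \setminus \{w\} \subseteq N(w) \setminus \{u\}$ or the reverse inclusion. If nesting failed for some pair, there would exist $a \in N(u) \setminus (N(w) \cup \{w\})$ and $b \in N(w) \setminus (N(u) \cup \{u\})$; then $ua, wb$ are edges while $ub, wa$ are non-edges, so the $2$-switch deleting $ua, wb$ and adding $ub, wa$ yields a graph with the identical degree sequence but different adjacency on the labeled pair $\{u,a\}$, contradicting uniqueness. Once neighborhoods are linearly ordered by inclusion, I would extract an extreme vertex: take $w$ with $N(w)$ maximal and $u$ with $N(u)$ minimal, and argue that if $w$ is not dominating then $N(w)$ omits some vertex $z$, forcing $N(z) \subseteq N(w)$ and, via minimality, driving $N(u)$ empty so that $u$ is isolated. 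The delicate point throughout is the labeled-realization bookkeeping, since ``unique realization'' must be read at the level of labeled graphs; I would address it by always tracking the change of adjacency on a \emph{named} pair of vertices, which guarantees the switched graph is a genuinely distinct labeled realization rather than merely an isomorphic copy.
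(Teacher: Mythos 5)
The paper never proves this statement: it is imported verbatim as Theorem 1.2.4 of the cited Mahadev--Peled monograph and used only as a black box to recognize $T(n,s)$ as a threshold sequence. So there is no in-paper proof to compare against, and your proposal must stand on its own. On its own terms the architecture is sound, and it is in fact the right proof for this paper's setup, since the paper defines threshold via uniqueness of the labeled realization: your forward induction (adding an isolated or dominating vertex preserves uniqueness) and your 2-switch argument (a failure of nested neighborhoods produces a second labeled realization with the same degree function) both work directly with that definition, and your observation that $G-v$ inherits uniqueness by extending any realization of its degree sequence back up to one of $G$'s is exactly the step that makes the peeling induction legitimate. One small repair in the forward direction: rather than ``any realization $H$ contains \emph{a} vertex $w$ of degree $n-1$,'' take $w$ to be the named vertex $v$ whose prescribed degree is $n-1$; otherwise $H-w$ and $G'$ live on different vertex sets and ``reattaching'' is not label-consistent.

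The one inference that does not follow as written is the last step of the extraction argument. From $N(u) \subseteq N(z) \subseteq N(w)$ and $z \notin N(w)$, minimality of $N(u)$ alone does not ``drive $N(u)$ empty'': these containments are perfectly consistent with $u$ having a neighbor. You need one further application of nestedness played against the maximality of $w$. Concretely, take $w$ of maximum degree and suppose it is not dominating, with non-neighbor $z$; if $z$ had a neighbor $a$, then $a \in N(z) \subseteq N(w)$, so $z \in N(a)\setminus\{w\}$ while $z \notin N(w)\setminus\{a\}$; nestedness then forces $N(w)\setminus\{a\} \subseteq N(a)\setminus\{w\}$, and since $w \in N(a)$ and $a \in N(w)$ this gives $\deg(a) \geq \deg(w)+1$, contradicting maximality. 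Hence every non-neighbor of $w$ is isolated, which is the dichotomy you wanted. Also note that it is the vicinal sets $N(x)\setminus\{y\}$ that are pairwise nested, not the plain neighborhoods (already in $K_2$ the sets $N(1)=\{2\}$ and $N(2)=\{1\}$ are incomparable), so ``neighborhoods linearly ordered by inclusion'' must be read with those corrections; with that bookkeeping and the patch above, your proof is complete.
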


\section{A One-Pass Approximation Scheme}

Our approximation scheme for the sequence $\alpha$ first involves creating a
graphic sequence $T$ with the same length and sum as $\alpha$ and that has 
several desirable features.  This sequence $T$ is defined as follows:

For the value $m = \min \lbrace k | s \leq k \cdot (k-1) \rbrace$,
\begin{equation*} T(n,s) =
\begin{cases}
\seq{0^{n}} & \text{if $s=0$,}\\
\seq{\gamma_1, ..., \gamma_m, 0^{n-m}} \text{ where $\gamma = T(m,s)$}
& \text{if $n>m$,}\\ 
\seq{n-1,\gamma_1+1,...,\gamma_{n-1}+1} \text{ where $\gamma =
T(n-1,s-2\cdot(n-1))$} & \text{if $n \leq m$.} \\
\end{cases}
\end{equation*}

By comparing the definition of $T$ with Theorem \ref{thm:dom_decomp},
we see that $T(n,s)$ defines graphic threshold sequence.
Thus, $T(n,s)$ is a maximal graphic sequence in the majorization lattice
(Theorem \ref{thm:top_graphic}). At the same time,  from the definition, this
sequence contains the minimum possible number of non-zero elements for a
graphic sequence.

Additionally, an advantage in using the sequence $T$ is that
we do not have to store or pre-compute this sequence.
We can quickly compute any index of $T(n,s)$ in constant time as shown
by the following result.

\begin{theorem}
\label{thm:equiv_def}
The sequence $T(n,s)=\seq{(p+1)^{q},p^{p+1-q},q,0^{n-p+2}}$ where
\begin{equation}
p = \left\lfloor \frac{\sqrt{4s+1}-1}{2} \right\rfloor,
\end{equation}
and
\begin{equation}
	q = \frac{s - p(p+1)}{2}.
\end{equation}
\end{theorem}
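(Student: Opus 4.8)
The plan is to prove the identity by strong induction on the sum $s$, which amounts to verifying that the closed-form expression obeys the three-case recursion defining $T$; since that recursion determines $T$ uniquely (the measure $n+s$ strictly decreases in both recursive cases), this suffices. Before the induction I would record three elementary facts about the parameters. First, $p$ is exactly the largest integer with $p(p+1) \le s$, so that $s = p(p+1) + 2q$ with $0 \le q \le p$. Second, a short computation pins down $m$: substituting $s = p(p+1)+2q$ and testing $k = p+1$ and $k = p+2$ against the defining inequality $s \le k(k-1)$ shows that $m = p+1$ when $q = 0$ and $m = p+2$ when $1 \le q \le p$. Third, and as a consequence, among the valid inputs, namely those with $n \ge m$ (exactly the range in which the trailing-zero count is nonnegative), the second recursive case applies precisely when $n > m$ and the third precisely when $n = m$.

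Given these facts, two of the three cases are routine. For the base case $s = 0$ I would substitute $p = q = 0$ into the closed form and read off $\seq{1^0, 0^1, 0, 0^{n-2}} = \seq{0^n}$. For the case $n > m$, the parameters $p$ and $q$ depend only on $s$ and not on $n$, so the closed forms for $T(n,s)$ and $T(m,s)$ share an identical nonzero prefix of length $m$ and differ only in the number of trailing zeros; hence appending $n-m$ zeros to $T(m,s)$ reproduces $T(n,s)$, exactly as the recursion demands. This reduces the $n > m$ case to the $n = m$ case treated below (at the same value of $s$, so no inductive hypothesis is needed here, only the closed form's structural consistency).

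The substance of the argument is the third case, which by the remark above need only be checked at $n = m$. Here I would peel off the dominating vertex of degree $m-1$ and set $s' = s - 2(m-1)$. The key algebraic step is to show that $s'$ sits in canonical position with shifted parameters: when $q \ge 1$ one computes $s' = (p-1)p + 2(q-1)$, giving $p' = p-1$ and $q' = q-1$, while when $q = 0$ one gets $s' = (p-1)p$, giving $p' = p-1$ and $q' = 0$. Writing $m(s')$ for the value of $m$ attached to $s'$, the input $(m-1,s')$ is valid because $m-1 \ge m(s')$, which follows from $s \le m(m-1)$; so the inductive hypothesis applies to $T(m-1,s')$. Incrementing every entry by one and prepending $m-1$, I would then match the result term-by-term against the closed form for $T(m,s)$, the exponents collapsing correctly; for instance, when $q \ge 1$ the prepended $m-1 = p+1$ merges with the $q-1$ leading copies of $p+1$ to give $q$ of them. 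The main obstacle I anticipate is the bookkeeping at the boundary $q = 0$, where the stated trailing-zero exponent degenerates because the isolated entry $q$ coincides with a zero and must be read together with the count; this forces a short separate treatment of the $q = 0$ subcase, but it goes through once the degree of the peeled vertex and the parameter shift are confirmed to behave as above.
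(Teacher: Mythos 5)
Your proposal is correct and follows essentially the same route as the paper's own proof: strong induction on $s$, reducing the $n>m$ case to the $n=m$ case, then peeling off the dominating vertex of degree $m-1$ and verifying the parameter shift $p' = p-1$, $q' = q-1$ so that the closed form reassembles after incrementing and prepending. If anything, your version is more careful than the paper's, which dismisses the trailing-zeros case with a ``without loss of generality'' remark and whose reconstruction identity $q = q'+1$ silently fails at the boundary $q=0$ (there $s' = (p-1)p$ and $q' = 0 = q$, exactly as you compute), so your separate treatment of that degenerate subcase closes a real gap rather than adding unnecessary bookkeeping.
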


\begin{proof}
We show this result by induction. For the base step, if $s=2$ then
we have the sequence $\seq{1^2,0^{n-2}}$ which matches the above
definition.

For the inductive step, we assume that for all sums $<s$,
the above result holds.  Take the sequence $\gamma = T(n,s)$ where
without a loss of generality we assume that $\gamma_n > 0$.
By reducing the sequence by removing the value $\gamma_1$ and
subtracting one from the remaining values,
we create the  new sequence $\gamma' = \seq{\gamma_2-1,...,\gamma_n-1}$
whose  sum is $s' = s - 2n$.
From the inductive hypothesis,  this new sequence can be written as
$\gamma'=\seq{(p'+1)^{q'},p'^{p'+1-q},q'}$ where $p'= n-2$. 
By adding back the value $\alpha_1$ along with adding one to the
values in $\gamma'$, we can now write $\gamma$ as
$\gamma=\seq{(p+1)^{q},p^{p+1-q},q}$ where
$p = p'+1$, $q = q'+1$, and $s' = s-2(p+1)$.

We now consider the values for  $p$ and $q$ in $\gamma$.
The value of the parameter $q$ can be computed as follows:
\begin{equation}
\label{eqn:newq}
\begin{split}
q &= q'+1 = \frac{s' - p'(p'+1)}{2} + 1  \\
  &= \frac{(s-2(p+1)) - (p-1)p}{2} + 1 = \frac{s - p(p+1)}{2}  \\
\end{split}
\end{equation}
The parameter $p = \left\lfloor \frac{\sqrt{4s+1}-1}{2} \right\rfloor$
is equivalent to computing the integer $p$ where
$p(p+1) \leq s < (p+1)(p+2)$.  
Equation \ref{eqn:newq} establishes that $p(p+1) \leq s$.  
The second part of the inequality holds as follows:
\begin{equation}
\begin{split}
s &= s'+2(p+1) = s' + 2(p'+2) = s'+2p'+4 \\
  &< s'+4p'+6 = (p'+1)(p'+2)+4p'+6 \\
  &= (p'+2)(p'+3) = (p+1)(p+2) \\
\end{split}
\end{equation}
Thus the premise is established.

\end{proof}

We apply this result to create the approximation algorithm.  The
idea is to take the meet operation between
the original non-graphic sequence $\alpha$ and the sequence $T$, where
the length and sum of $\alpha$ and $T$ are equal.
Since $\alpha \wedge T \rmaj T$, then by Theorem~\ref{thm:graphicality}
the resulting
sequence will be graphic. Also, since we are able to compute each value of
$T$ in constant time, we only need one-pass through the sequence
$\alpha$ to create the graphic approximation for a run-time of $O(n)$.
The full algorithm is shown in Algorithm~\ref{alg:graphic_approx}.

\begin{algorithm}[H]
\singlespacing
\SetAlgoLined
\DontPrintSemicolon
\KwIn{$\alpha$ - a potentially graphic sequence where $\alpha_1 \geq
\alpha_2 \geq ... \geq \alpha_n \geq 0$,\\
s - sum of sequence $\alpha$}
\KwOut{$\beta$ - a graphic sequence}
$p \gets \left\lfloor \frac{\sqrt{4s+1}-1}{2} \right\rfloor$ \;
$q \gets \frac{s - p(p+1)}{2}$ \;
$\beta \gets \seq{0^{|\alpha|}}$, $as \gets 0$, $bs \gets 0$, $ts \gets 0$ \;
\For {$i \gets 1$ \KwTo $|\alpha|$}{
	$as \gets as + \alpha[i]$ \;
	$ti \gets 0$ \;
	\uIf {$i \leq q$}{
		$ti \gets p+1$\;
	}\uElseIf {$i > q \And i \leq p+1$}{
		$ti \gets p$ \;
	}\uElseIf {$i = p+2$}{
		$ti \gets q$ \;
	}
	$ts \gets ts + ti$ \;
	$\beta[i] \gets \min\{as,ts\} - bs$ \;
	$bs \gets bs + \beta[i]$ \;
}
\Return $\beta$ \; 
\caption{Graphic Sequence Approximation Algorithm}
\label{alg:graphic_approx}
\end{algorithm}

We point out a couple of practical points about this algorithm.
While a degree sequence is being generated, it can be sorted simultaneously
by using a binned sort.
The sorted sequence can be tested whether it is
graphic or not in an additional $O(\sqrt{s})$ steps \cite{Cloteaux:2015}. 
If the sequence is not graphic, this approximation scheme only requires
one more pass through the sequence to produce a graphic sequence.
In addition, if the sequence sum $s$ is not even, then we can use this
approximation scheme for the sum $s-1$ to produce a graphic sequence. 

\section{Comparing the Resulting Sequences}

Mihail and Vishoi originally suggested the discrepancy metric
for measuring the distance between the original sequence and its
approximation.
The discrepancy, $\Delta$ measures differences between individual values
in the two sequences, i.e., for the sequences $\alpha$ and $\beta$,
\begin{equation}
\Delta(\alpha,\beta) = \sum_{i} |\alpha_i - \beta_i|.
\end{equation}
As mentioned before, there are several polynomial-time algorithms 
that can find minimal discrepancy graphic approximations
\cite{Mihail:2002,Hell:2009}.

For the problem of drawing a graphic sequence from a probability distribution,
we suggest that a better measure to minimize between two sequences
is the distance between the probability distributions of the two
sequences.  We write the sequence $\alpha = 
\seq{1^{r_1}, 2^{r_2}, ..., (n-1)^{r_{n-1}}}$ where all $i$ where $0
\leq r_i \leq n-1$, and so we are interested in a minimal
matching to the vector $\seq{r_1, r_2, ..., r_{n-1}}$ by the
probability distribution of the approximation sequence.
The probability of an individual value $i$ being chosen is $P(\alpha)_i =
r_i/n$, and we denote the probability distribution of a sequence
$\alpha$ as $P(\alpha) = \seq{r_1/n, r_2/n, ..., r_{n-1}/n}$.
For measuring probability distances for creating approximations, we use the
{\it total variation distance}.
This distance measures the
largest possible probability difference for an event between two
distributions $P$ and $Q$ and is defined as
\begin{equation}
	TV(P,Q) = \sup_{\omega \in \Omega} |P_\omega -Q_\omega|.
\end{equation}

For the general case of this problem, the time needed to compute a minimum
total variation graphic sequence for a given non-graphic sequence is an
open problem.
We show that for many useful distributions, the distribution arising
from the approximation sequence created by Algorithm
\ref{alg:graphic_approx} will asymptotically approach a minimal total
distance. In particular, this applies to sequences where the sum $s$ of the
sequence is $\sqrt{s} = o(n)$.
These sequences include many commonly used distributions, such as power-law
distributions.  

\begin{figure}
\centering
\includegraphics[width=.9\textwidth]{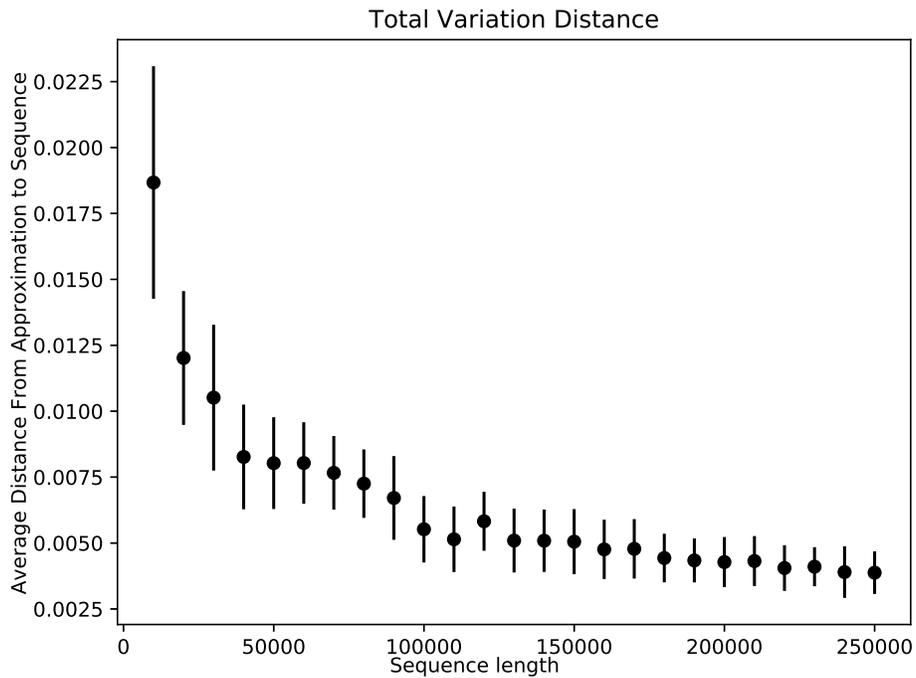}
\caption{This figure represents the average difference between a
non-graphic sequence selected from a power-law distribution with
exponent of 2 and its approximation created from Algorithm
\ref{alg:graphic_approx}. Each point represents the average distance for 30
sequences created at a given length. 
The error bars represent one standard deviation.
}
\label{fig:approx-power} 
\end{figure}

\begin{theorem}
Let $\mathcal{F}$ be a family of integer sequences where for any $\alpha
\in \mathcal{F}$ whose sum is $s$ and length is $n$, then $s = o(n^2)$.
For $\alpha \in \mathcal{F}$,
\begin{equation}
	\lim_{n\rightarrow\infty} TV(P(\alpha),P(\alpha \wedge T(n,s))) =0.
\end{equation}
\end{theorem}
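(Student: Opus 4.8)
The plan is to bound the total variation distance directly by counting how many coordinates of $\alpha$ the meet operation actually disturbs, and then to show this count is $o(n)$. Since the paper's $TV$ is the $\ell_\infty$ distance between the probability vectors, writing $r_i(\cdot)$ for the multiplicity of the value $i$ in a sequence, we have
\begin{equation*}
TV(P(\alpha),P(\alpha \wedge T)) = \frac{1}{n}\max_{i}\bigl|r_i(\alpha) - r_i(\alpha\wedge T)\bigr|,
\end{equation*}
so it suffices to prove that the maximum in the numerator is $o(n)$.

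First I would record the shape of $T = T(n,s)$ from Theorem~\ref{thm:equiv_def}: every nonzero entry of $T$ lies in positions $1,\dots,p+2$, and its partial sums satisfy $\sum_{i=1}^{k} T_i = s$ for all $k \geq p+2$, where $p = \lfloor(\sqrt{4s+1}-1)/2\rfloor \leq \sqrt{s}$. Writing $A_k = \sum_{i=1}^{k}\alpha_i$ and applying the meet formula~(\ref{eqn:meet-operation}), the key computation is that for every $k \geq p+3$ both $\sum_{i=1}^{k-1}T_i$ and $\sum_{i=1}^{k}T_i$ equal $s \geq A_k \geq A_{k-1}$, so $(\alpha\wedge T)_k = \min(A_k,s) - \min(A_{k-1},s) = A_k - A_{k-1} = \alpha_k$. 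Hence $\alpha$ and $\alpha\wedge T$ can disagree only on the first $p+2$ coordinates.

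From here the argument is elementary. Changing at most $p+2$ entries of a sequence can change the multiplicity of any fixed value by at most $p+2$, so $\max_i \bigl|r_i(\alpha) - r_i(\alpha\wedge T)\bigr| \leq p+2$. Substituting into the expression for $TV$ gives $TV(P(\alpha),P(\alpha\wedge T)) \leq (p+2)/n \leq (\sqrt{s}+2)/n$. The hypothesis $s = o(n^2)$ means $\sqrt{s} = o(n)$, so this bound tends to $0$, which is the claim.

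The substantive step is the middle one: showing the meet disturbs only the first $O(\sqrt{s})$ positions. This is exactly where the special structure of $T$ does the work, namely that its partial sums saturate at the full sum $s$ by position $p+2$, which in turn is why the construction uses the threshold sequence with the fewest nonzero entries rather than an arbitrary maximal graphic sequence. Everything else — the reduction to counting changed coordinates and the final asymptotics — is routine, so I do not anticipate further obstacles; the only point requiring care is to confirm that $p+2 \leq n$ so that $T$ is well-defined, which again follows for large $n$ from $\sqrt{s} = o(n)$.
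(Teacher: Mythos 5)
Your proof is correct and follows essentially the same route as the paper's: both arguments observe that the meet can alter only the first $O(\sqrt{s})$ entries of $\alpha$ (because the partial sums of $T$ saturate at $s$ past its short nonzero prefix), bound the resulting change in any multiplicity by that count, and conclude from $s = o(n^2)$ that the total variation distance is at most roughly $\sqrt{s}/n \rightarrow 0$. The only difference is that you verify the agreement of $(\alpha \wedge T)_k$ with $\alpha_k$ explicitly from the meet formula, a step the paper asserts directly from the definition.
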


\begin{proof}
From the  definition of the sequence $T$, the number of non-zero entries
it contains is $m$.  From the
definition of the meet operation, the values for the sequence $T\wedge
\alpha$ will be equal to $\alpha_i$ for all indices $i > m$.
It follows from Theorem \ref{thm:equiv_def}, that 
a simple upper bound on $m$ is $\sqrt{s}$. Thus an upper
bound on the possible change in the probability of any one value in the
distribution is
\begin{equation}
	\sup_{\omega} |P(\alpha)_\omega -P(\alpha \wedge T(n,s))_\omega| \leq \frac{m}{n} <
\frac{\sqrt{s}}{n}.
\end{equation}
Since $s = o(n^2)$, then it follows
\begin{equation}
	\lim_{n\rightarrow\infty} TV(P(\alpha),P(\alpha \wedge T(n,s))) \leq
	\lim_{n\rightarrow\infty} \frac{\sqrt{s}}{n} = 0,
\end{equation}
completing the argument.
\end{proof}

This convergent behavior
is shown in Figure \ref{fig:approx-power}.
In this example, for each point, 30 non-graphic sequences with even sum were
drawn from a power-law distribution with exponent 2 and the average
distance to their approximation was computed.  
We see that the average total variation distance between the original
sequences and their approximations approach zero as the sequence length grows.

\section{Conclusions}

We give a simple and extremely fast method for approximating a
non-graphic sequence with a graphic one.  This approach is particularly 
well-suited for large sequences, since only one-pass through the
sequence is needed, and the probability distances are reduced as the
sequences become larger.  This presents a practical algorithm for
large network modeling applications.

{ \singlespacing
\bibliographystyle{nist}
\bibliography{fast_graphic}
}

\end{document}